\newtheorem{theorem}{Theorem}[section]
\newtheorem{lemma}[theorem]{Lemma}
\newtheorem{claim}[theorem]{Claim}
\newtheorem{proposition}[theorem]{Proposition}
\theoremstyle{remark}
\newcommand\figcaption{\def\@captype{figure}\caption}
\newcommand\tabcaption{\def\@captype{table}\caption}
\begin{document}

\title{\bf On Maximizing a Monotone $k$-Submodular Function under a Knapsack Constraint 
}

\date{}
\maketitle

\vspace{-3em}
\begin{center}

\author{Zhongzheng Tang$^1$\quad Chenhao Wang$^{2,3}$\quad Hau Chan$^4$\\
${}$\\
1  Beijing University of Posts and Telecommunications\\
2 Beijing Normal University-Zhuhai\\
3  BNU-HKBU United International College\\
4 University of Nebraska-Lincoln\\
\medskip
tangzhongzheng@amss.ac.cn, ~chenhwang@bnu.edu.cn,~
hchan3@unl.edu}
\end{center}
\vspace{1em}


\maketitle

\begin{abstract}
We study the problem of maximizing a non-negative monotone $k$-submodular function $f$ under a knapsack constraint, where a $k$-submodular function is a natural generalization of a submodular function to $k$ dimensions. We present a deterministic $(\frac12-\frac{1}{2e})\approx 0.316$-approximation algorithm that evaluates $f$ $O(n^4k^3)$ times, based on the result of Sviridenko (2004) on submodular knapsack maximization.\footnote{This manuscript is published in Operations Research Letters, but there is an error
in the proof of Theorem 1. We provide a corrigendum in the end of this manuscript. }
\end{abstract}

\section{Introduction}\label{intro}
\vspace{-2mm}A $k$-submodular function is a generalization of submodular function, where the input consists of $k$ disjoint subsets of the domain, instead of a single subset.   The $k$-submodular maximization problem has been studied in the unconstrained setting \cite{ward2016maximizing}, under cardinality constraints \cite{ohsaka2015monotone}, and under matroid constraints \cite{sakaue2017maximizing}, because it appears in a broad range of applications (e.g., influence maximization with $k$ kinds of topics, and sensor placement with $k$ kinds of sensors \cite{ohsaka2015monotone}). 


Let $V$ be a finite set. 
Let $(k+1)^V:=\{(X_1,\ldots,X_k)~|~X_i\subseteq V ~\forall i\in[k], X_i\cap X_j=\varnothing ~\forall i\neq j\}$ be the family of $k$ disjoint sets, where $[k]:=\{1,\ldots,k\}$. A function $f:(k+1)^V\rightarrow \mathbb R$ is called \emph{$k$-submodular} \cite{huber2012towards}, if for any $\mathbf x=(X_1,\ldots,X_k)$ and $\mathbf y=(Y_1,\ldots,Y_k)$ in $(k+1)^V$, we have
$$f(\mathbf x)+f(\mathbf y)\ge f(\mathbf x \sqcup\mathbf y)+f(\mathbf x \sqcap\mathbf y),$$
where
$$\mathbf x\sqcup\mathbf y:=\left(X_1\cup Y_1\backslash(\bigcup_{i\neq 1}X_i\cup Y_i),\ldots,X_k\cup Y_k\backslash(\bigcup_{i\neq k}X_i\cup Y_i)\right),$$
$$\mathbf x\sqcap\mathbf y:=\left(X_1\cap Y_1,\ldots,X_k\cap Y_k\right).$$

Denote $\mathbf x\preceq\mathbf y$, if $\mathbf x=(X_1,\ldots,X_k)$ and $\mathbf y=(Y_1,\ldots,Y_k)$ with $X_i\subseteq Y_i$ for each $i\in[k]$. 
Define the marginal gain when adding item $a$ to the $i$-th dimension of $\mathbf x$ to be
$$\Delta_{a,i}(\mathbf x):=f(X_1,\ldots,X_{i-1},X_i\cup\{a\},X_{i+1},\ldots,X_k)-f(\mathbf x).$$
It is not hard to see that, a $k$-submodular function $f$ satisfies the \emph{orthant submodularity}, that is, $$\Delta_{a,i}f(\mathbf x)\ge \Delta_{a,i}f(\mathbf y),~\text{~for any~}~ \mathbf x,\mathbf y\in(k+1)^V~\text{~with~}~ \mathbf x\preceq\mathbf y, a\notin\cup_{j\in[k]}Y_j, i\in[k].$$
A function $f:(k + 1)^V\rightarrow\mathbb R$ is called \emph{monotone}, if $f(\mathbf x)\le f(\mathbf y)$ for any $\mathbf x\preceq\mathbf y$. Ward and {\v{Z}}ivn{\`y} \cite{ward2016maximizing} shows that when monotonicity holds, $f$ is $k$-submodular if and only if it is orthant submodular.

In this note, we study the maximization problem of a non-negative monotone $k$-submodular function under a knapsack constraint, and give a deterministic $(\frac12-\frac{1}{2e})$-approximation algorithm (see Theorem \ref{thm:main}). {It is an adaption to  Sviridenko's $(1-\frac1e)$-approximation algorithm for submodular knapsack maximization \cite{sviridenko2004note}.}

\medskip\noindent\textbf{Related works.}  It is well known that the \emph{diminishing return} property characterizes the submodular function. This property often appears in practice, and various problems can be formulated as submodular function maximization, {under different constraints}. Unfortunately, submodular function maximization is {generally} known to be NP-hard {\cite{feige1998threshold}}. Therefore, approximation algorithms that can run in polynomial time have been extensively studied. 

\noindent\emph{Submodular knapsack.}  {For monotone submodular maximization under a knapsack constraint, Sviridenko \cite{sviridenko2004note} presents a greedy $(1-\frac{1}{e})$-approximation algorithm with $O(n^5)$ queries, which enumerates all feasible sets of size no more than than 3 and then expands each set of size 3 greedily by the marginal density. This is the best possible approximation ratio among polynomial-time algorithms. Faster algorithms with $(1-\frac{1}{e}-\epsilon)$-approximation exist \cite{ene2019nearly}, but the time is exponential to $\frac{1}{\epsilon}$.  Yaroslavtsev \emph{et al.} \cite{yaroslavtsev2020bring} presented a Greedy+Max algorithm that is a $\frac12$-approximation with query complexity $O(\tilde Kn)$, where $\tilde K$ is an upper bound on the number of elements in any feasible solution. Huang \emph{et al.} \cite{huang2017streaming,huang2021improved} considered this problem in a streaming setting.  }

\noindent\emph{$k$-submodular maximization.} One decade ago, Huber and Kolmogorov \cite{huber2012towards} proposed $k$-submodular functions, which express the submodularity on choosing $k$ disjoint sets of elements instead of a single set, 
and recently become a popular subject of research \cite{gridchyn2013potts,hirai2016k,nguyen2020streaming,soma2019no}.
For unconstrained  non-monotone $k$-submodular maximization, Ward and {\v{Z}}ivn{\`y} \cite{ward2016maximizing}  proposed  a  $\max\{\frac13,\frac{1}{1 +a}\}$-approximation  algorithm with $a= \max\{1,\sqrt{\frac{k-1}{4}}\}$. Later, Iwata \emph{et al.} \cite{iwata2016improved} improved the approximation ratio to $\frac12$, which is improved to $\frac{k^2+1}{2k^2+1}$ by Oshima \cite{oshima2021improved} more recently. For unconstrained  \emph{monotone} $k$-submodular maximization, Ward and {\v{Z}}ivn{\`y} \cite{ward2016maximizing} {proved} that a greedy algorithm is $\frac12$-approximaion, and later, Iwata \emph{et al.} \cite{iwata2016improved} proposed a randomized $\frac{k}{2k-1}$-approximation algorithm, which is asymptotically tight.

In the constrained setting, Ohsaka and Yoshida \cite{ohsaka2015monotone} proposed a $\frac12$-approximation algorithm for nonnegative monotone $k$-submodular maximization with a total size constraint (i.e., $\cup_{i\in[k]}|X_i|\le B$ for an integer $B$) and a $\frac13$-approximation algorithm for that with individual size constraints (i.e., $|X_i|\le B_i~\forall i\in[k]$ with integers $B_i$). Sakaue \cite{sakaue2017maximizing} proposed a $\frac12$-approximation algorithm for nonnegative monotone $k$-submodular maximization with  a matroid  constraint {on the union of the sets.}
Thus, our work completes the picture by studying a knapsack constraint. 

\section{Preliminaries}


For notational ease, we identify $(k+1)^V$ with the set
{$$\mathscr S=\left\{\cup_{j=1}^t\{(a_j,i_j)\}~|~\forall t\in[|V|],a_j\in V~i_j\in[k]~\forall j\in[t], a_j\neq a_{j'}~\forall j\neq j'\right\}\cup \{\varnothing\}$$}
 that is, any $k$-disjoint set $\mathbf x=(X_1,\ldots,X_k)\in (k+1)^V$ uniquely corresponds to an item-index pairs set $S\in\mathscr S$, such that $(a_j,i_j)\in S$ if and only if $a_j\in X_{i_j}$. From now on, we rewrite $f(\mathbf x)$ as $f(S)$ with some abuse of notations, and thus $\Delta_{a,i}(S)$ means the marginal gain $f(S\cup\{(a,i)\})-f(S)$.
 For any $S\in\mathscr S$, we define $U(S):=\{a\in V~|~\exists i\in[k]~ \text{s.t.}~ (a,i)\in S\}$ to be the set of items included, and the \emph{size} of $S$ is $|S|=|U(S)|$.
 {In the remainder of this note, let $f$ be an arbitrary non-negative, monotone, $k$-submodular function. } We further assume that $f(\varnothing)=0$, which is without loss of generality because otherwise we can  redefine $f(S):=f(S)-f(\varnothing)$ for all $S\in\mathscr S$. 

We first introduce an important lemma.
\begin{lemma}\label{lem:21}
For any $S,S'\in\mathscr S$ with $S\subseteq S'$, we have
$$f(S')-f(S)\le \sum_{(a,i)\in S'\backslash S}\Delta_{a,i}(S).$$
\end{lemma}
\begin{proof}
Let $t=|S'|-|S|$. Define arbitrary manner sets $\{S_j\}_{j=0}^t$ such that (1) $S_0=S$; (2) $|S_{j}\backslash S_{j-1}|=1$ for $j\in[t]$; (3) $S_t=S'$. Let $\{(a_j,i_j)\}:=S_{j}\backslash S_{j-1}$ for $j\in[t]$. Then we have
\begin{align*}
    f(S')-f(S)&=\sum_{j=1}^t \Delta_{a_j,i_j}(S_{j-1})\le \sum_{j=1}^t \Delta_{a_j,i_j}(S),
\end{align*}
where the inequality follows from the orthant submodularity.
\end{proof}

The following proposition from Ward and {\v{Z}}ivn{\`y} \cite{ward2016maximizing} says that Greedy (see Algorithm \ref{alg:TS})  is $\frac{1}{2}$-approximation for maximizing $f$ without constraint (note that Theorem 5.1 of \cite{ward2016maximizing} states a more general conclusion which holds for a large class of $k$-set functions). Greedy considers items in an arbitrary order, and assign each item the \emph{best} index that brings the largest marginal gain.

\begin{algorithm}[t]
	\caption{\hspace{-2pt}~{\bf Greedy (without constraint)}
	}
	\label{alg:TS}
	\begin{algorithmic}[1]
	\REQUIRE  Set $V=\{1,2,\ldots,n\}$, monotone $k$-submodular function $f$
	\ENSURE A solution  $S\in\mathscr S$
	
\STATE $S\leftarrow \varnothing$
	\FOR{$a=1$ to $n$}
	\STATE  $i_a\leftarrow \arg\max_{i\in[k]}\Delta_{a,i}(S)$
\STATE $S\leftarrow S\cup\{(a,i_a)\}$
	\ENDFOR
	\RETURN $S$
	\end{algorithmic}
\end{algorithm}

\begin{proposition}[\cite{ward2016maximizing}]\label{thm:jie} 
Let $T\in \mathscr S$ be a solution that maximizes $f$ in the unconstrained setting, and $S\in \mathscr S$ be the solution returned by Greedy. Then $f(T)\le 2\cdot f(S)$.
\end{proposition}

In the later proofs we will use the following inequality from Wolsey \cite{wolsey1982maximising}.

\begin{lemma}[\cite{wolsey1982maximising}]\label{lem:wol}
{Let} $P$ and $D$ {be} arbitrary positive integers, {and} $(\rho_i)_{i=1}^P$ {be} arbitrary nonnegative {real values} with $\rho_1>0$. {Then}
$$\frac{\sum_{i=1}^P\rho_i}{\min_{t=1,\ldots,P}(\sum_{i=1}^{t-1}\rho_i+D\rho_t)}\ge 1-(1-\frac1D)^P\ge 1-e^{-P/D}.$$
\end{lemma}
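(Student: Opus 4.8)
The plan is to prove the two inequalities separately, since the second one is essentially free. The bound $1-(1-\frac1D)^P \ge 1-e^{-P/D}$ follows immediately from the elementary estimate $1-x\le e^{-x}$ applied with $x=1/D$ and raised to the $P$-th power, giving $(1-\frac1D)^P\le e^{-P/D}$. So the real content is the first inequality, for which I would argue as follows.

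Introduce the partial sums $\sigma_t := \sum_{i=1}^t \rho_i$ with $\sigma_0 = 0$, and write $M := \min_{t\in[P]}(\sum_{i=1}^{t-1}\rho_i + D\rho_t)$ for the denominator. The key reformulation is to rewrite each term inside the minimum as $\sigma_{t-1} + D(\sigma_t - \sigma_{t-1}) = D\sigma_t - (D-1)\sigma_{t-1}$. Because $M$ is the minimum over $t$, we have $M \le D\sigma_t - (D-1)\sigma_{t-1}$ for every $t\in[P]$, and this rearranges into the linear recursion $\sigma_t \ge \frac{M}{D} + (1-\frac1D)\sigma_{t-1}$.

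The next step is to solve this recursion. Unrolling from $\sigma_0 = 0$ (or, equivalently, a short induction on $t$) yields $\sigma_t \ge \frac{M}{D}\sum_{j=0}^{t-1}(1-\frac1D)^j = M\,(1-(1-\frac1D)^t)$, where the last equality is just the closed form of a geometric sum using $1-(1-\frac1D)=\frac1D$. Setting $t=P$ gives $\sum_{i=1}^P \rho_i = \sigma_P \ge M\,(1-(1-\frac1D)^P)$, and dividing through by $M$ produces exactly the claimed lower bound $\frac{\sum_{i=1}^P\rho_i}{M}\ge 1-(1-\frac1D)^P$.

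The argument is elementary, so there is no deep obstacle; the one genuinely clever point is the reformulation of the min-term as $D\sigma_t - (D-1)\sigma_{t-1}$, which is what turns the awkward minimum into a clean geometric recursion, after which the rest is bookkeeping with the induction and the geometric sum. The only side condition to check is that the division by $M$ is legitimate, i.e. $M>0$: this holds since the $t=1$ term equals $D\rho_1>0$ while every term with $t\ge 2$ is at least $\rho_1>0$, so $M\ge\rho_1>0$.
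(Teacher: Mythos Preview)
Your proof is correct. The paper itself does not prove this lemma; it is quoted from Wolsey \cite{wolsey1982maximising} without argument, so there is no in-paper proof to compare against. Your approach---reformulating the $t$-th term of the minimum as $D\sigma_t-(D-1)\sigma_{t-1}$, extracting the recursion $\sigma_t\ge \frac{M}{D}+(1-\frac1D)\sigma_{t-1}$, and summing the resulting geometric series---is the standard way this inequality is derived, and all steps (including the check that $M>0$) are sound.
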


\section{Knapsack constraint}

{Given a set} $V=\{1,\ldots,n\}$, nonnegative integers {$c_a\in\mathbb N$} for all $a\in V$, and budget $B\in \mathbb R$, we consider the following maximization problem with a  knapsack constraint,
\begin{equation}\label{pro}
\max\limits_{S\in\mathscr S}\left\{f(S): \sum_{a\in U(S)}c_a\le B\right\}.
\end{equation}

Sviridenko \cite{sviridenko2004note} considers the special case of $k=1$ (i.e., submodular maximization with a knapsack constraint), and presents a greedy $(1-\frac{1}{e})$-approximation algorithm {with $O(n^5)$ queries, which enumerates all feasible sets of size  at most 3 and then expands each set of size 3  greedily by the marginal density.}  We adapted it to Algorithm \ref{alg:line} for problem (\ref{pro}) {by enumerating all feasible sets of size  at most 2 and then expanding each set of size 2 greedily}, and prove an approximation ratio of $\frac12-\frac{1}{2e}$. For any solution $S\in\mathscr S$, define $c(S)$ to be the total cost of all items in $S$.

In Line 1 of Algorithm \ref{alg:line}, it enumerates all feasible {singleton} solutions, and store the currently best solution as $S_A$; it takes {$O(nk)$} oracle queries. Then it considers all feasible sets of size {two}, and completes each such set greedily with respect to the density, subject to the knapsack constraint. There are {$O(n^2k^2)$} such sets, and for each set it takes $O(n^2k)$ queries. Thus, the time complexity is {$O(n^4k^3)$}. 

\begin{algorithm}[H]
	\caption{\hspace{-2pt}~{\bf Greedy for (\ref{pro})}}
	\label{alg:line}
	\begin{algorithmic}[1]
	
	\STATE   Let {$S_A\in\arg\max\limits_{S:~|S|=1,c(S)\le B}f(S)$} be a {singleton} solution giving the largest value. 
	\FOR{ every  $I\in\mathscr S$ of {size 2}}
\STATE $S^0\leftarrow I$, $V^0\leftarrow V\backslash U(I)$
\FOR{$t$ from 1 to $n$}
	\STATE Let $\theta_t=\max\limits_{a\in V^{t-1},i\in[k]}\frac{\Delta_{a,i}(S^{t-1})}{c_a}$, and assume that the maximum is attained on $(a_t,i_t)$
	\IF{$c(S^{t-1})+c_{a_t}\le B$}
	\STATE $S^t=S^{t-1}\cup\{(a_t,i_t)\}$ 
\ELSE
\STATE $S^t=S^{t-1}$
	\ENDIF
	\STATE {$V^t=V^{t-1}\backslash\{a_t\}$}
\ENDFOR
\STATE $S_A\leftarrow S^n$ if $f(S^n)>f(S_A)$
	\ENDFOR
\RETURN $S_A$
	\end{algorithmic}
\end{algorithm}

\begin{theorem}\label{thm:main}
For maximizing $f$ under a knapsack constraint, Algorithm \ref{alg:line} has an approximation ratio of $\frac12-\frac{1}{2e}$, and evaluates $f$ {$O(n^4k^3)$} times.
\end{theorem}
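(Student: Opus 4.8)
The plan is to lift Sviridenko's partial-enumeration argument \cite{sviridenko2004note} from the submodular ($k=1$) case to $k$ dimensions, paying one extra factor of $\tfrac12$ for the fact that each item must be committed to a single dimension. Fix an optimal solution $O^*$ of (\ref{pro}), write $\sigma(a)\in[k]$ for the dimension into which $O^*$ places item $a\in U(O^*)$, and set $\mathrm{OPT}:=f(O^*)$. If $|O^*|\le 2$ then line~1 of Algorithm~\ref{alg:line} enumerates $O^*$ and returns it, so assume $|O^*|\ge 3$. Order the pairs of $O^*$ greedily: let $o_j=(a_j,i_j)$ be the pair of $O^*$ maximizing $\Delta_{a_j,i_j}(\{o_1,\dots,o_{j-1}\})$ among those not yet chosen, and put $I^*=\{o_1,o_2,o_3\}$. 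Since the for-loop ranges over all size-three sets, some iteration starts from $I=I^*$; I will argue that this iteration alone produces a solution of value at least $(\tfrac12-\tfrac1{2e})\mathrm{OPT}$, which suffices because $S_A$ stores the best solution ever encountered.

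Focus on that iteration, with $S^0=I^*\subseteq S^1\subseteq\cdots$ and $\theta_t$ the maximum density chosen at step $t$, and let $p$ be the first step at which $(a_p,i_p)$ is rejected, i.e.\ $c(S^{p-1})+c_{a_p}>B$ while $S^{p-1}$ is feasible. The technical heart is the per-step inequality
\[
\tfrac12\,\mathrm{OPT}-f(S^{t-1})\ \le\ \theta_t\,c(O^*)\qquad(t\le p),
\]
which, since $c(O^*)\le B$, says the chosen density is large relative to the residual optimum. Its easy half is standard: every item $a\in U(O^*)$ still available at step $t$ offers the candidate pair $(a,\sigma(a))$, so $\Delta_{a,\sigma(a)}(S^{t-1})\le\theta_t c_a$, and summing over these items with Lemma~\ref{lem:21} bounds their joint marginal contribution by $\theta_t c(O^*)$. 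The obstruction is the optimal items the greedy has \emph{already} inserted, possibly into a dimension $\ne\sigma(a)$: for these, density-maximality no longer applies. The remedy is orthant submodularity: at the step such an item was inserted, the greedy took its best dimension, so the value it banked is at least what $O^*$ gains from it in $\sigma(a)$. Charging each such optimal contribution against value the greedy has already collected --- rather than against a future density step --- is precisely what degrades the constant to $\tfrac12$, exactly paralleling the factor $2$ of Theorem~\ref{thm:jie}. I expect making this charging rigorous to be the main obstacle: one must turn a conflicting pair $(a,\sigma(a))$ versus $(a,i_a)\in S^{t-1}$ into a valid telescoping bound through the $\sqcup/\sqcap$ operations, removing $a$ from its greedy dimension and reinserting it into $\sigma(a)$ without losing $f$-value.

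Granting the per-step inequality, the tail is routine. Writing $\rho_t=f(S^t)-f(S^{t-1})=\theta_t c_{a_t}$ and $\phi_t=\tfrac12\mathrm{OPT}-f(S^t)$, the inequality rearranges to $\phi_t\le(1-\tfrac{c_{a_t}}{B})\phi_{t-1}$; equivalently the accumulated gains fit the hypotheses of Wolsey's inequality (Lemma~\ref{lem:wol}), with the budget $B$ in the role of $D$. Because rejection at step $p$ forces the costs consumed through that step to exceed $B$, the product $\prod_{t\le p}(1-\tfrac{c_{a_t}}{B})$ is at most $e^{-1}$, whence $\phi_p\le e^{-1}\phi_0\le \tfrac1{2e}\mathrm{OPT}$. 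This yields $f\big(S^{p-1}\cup\{(a_p,i_p)\}\big)\ge(\tfrac12-\tfrac1{2e})\mathrm{OPT}$.

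It remains to discard the single infeasible pair $(a_p,i_p)$ and to absorb the budget consumed by the seed $I^*$. Here I would follow Sviridenko's accounting \cite{sviridenko2004note}: because $I^*$ comprises the three pairs of $O^*$ of largest marginal gain, the marginal of the rejected pair on top of $S^{p-1}$ is dominated by gains already secured, so deleting it costs only a lower-order term and $f(S^{p-1})\ge(\tfrac12-\tfrac1{2e})\mathrm{OPT}$ still holds; the case $|O^*|\le 2$ was already settled by line~1. Taking the best recorded solution gives $f(S_A)\ge(\tfrac12-\tfrac1{2e})\mathrm{OPT}$. The query bound $O(n^5k^4)$ is exactly the count in the paragraph preceding Algorithm~\ref{alg:line}: $O(n^2k^2)$ for the size-$\le 2$ enumeration and $O(n^3k^3)$ size-three seeds each completed in $O(n^2k)$ evaluations.
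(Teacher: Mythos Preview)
Your plan follows the same Sviridenko-style template as the paper, but the core step --- the per-step inequality $\tfrac12\,\mathrm{OPT}-f(S^{t-1})\le\theta_t\,c(O^*)$ --- is left open, and the route you sketch (pairwise charging through $\sqcup/\sqcap$) is not the one the paper takes. The paper resolves this step by invoking Theorem~\ref{thm:jie} as a black box: working with $g(\cdot)=f(\cdot)-f(I^*)$, it builds an auxiliary greedy solution $\tilde S$ over the item set $U(T)\cup U(S^t)$ by first replaying the algorithm's choices (so $S^t\subseteq\tilde S$) and then appending the remaining optimal items, each in its best dimension. Since $\tilde S$ is exactly the output of Algorithm~\ref{alg:TS} on that ground set, Theorem~\ref{thm:jie} gives $g(T)\le OPT_g(U(T)\cup U(S^t))\le 2\,g(\tilde S)$, and Lemma~\ref{lem:21} together with density-maximality turns $g(\tilde S)$ into $g(S^t)+(B-c(I^*))\,\theta_{t+1}$. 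This is the missing idea: the factor $2$ comes directly from the unconstrained greedy guarantee, not from an ad hoc swap argument.

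Two bookkeeping points you hand-wave are also handled carefully in the paper and would break your argument as written. First, the paper runs the entire recursion in terms of $g$ and budget $B''=B-c(I^*)$; your version uses $B$ (via $c(O^*)\le B$), but rejection at step $p$ only forces $\sum_{t\le p}c_{a_t}>B-c(I^*)$, so the product $\prod_t(1-c_{a_t}/B)$ is bounded only by $e^{-(B-c(I^*))/B}$, not $e^{-1}$. Second, discarding the rejected pair is not a ``lower-order term'': by \eqref{eq:33} it costs up to $f(I^*)/3$, and this is absorbed only because the $g$-formulation leaves a free $\tfrac23 f(I^*)\ge\tfrac12(1-e^{-1})f(I^*)$ at the end. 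The paper also argues, by pruning earlier-rejected non-optimal items from the ground set, that one may assume the first rejected item belongs to $U(T)$; without that reduction \eqref{eq:33} need not apply to $(a_p,i_p)$ at all.
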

\begin{proof}
 Let $T=\{(a_1^\ast,i_1^\ast),\ldots,(a_{|T|}^\ast,i_{|T|}^\ast)\}$ be an optimal solution. {If $|T|=1$}, our algorithm must find it in Line 1. So we only need to consider {$|T|\ge 2$}.   We order the set $T$ so that for any $t=1,\ldots,|T|$,
 $$f(T^t)=\max_{(a,i)\in T\backslash T^{t-1}}f(T^{t-1}\cup\{(a,i)\}),$$
 where {$T^t=\{(a_1^\ast,i_1^\ast),\ldots,(a_{t}^\ast,i_{t}^\ast)\}$}, and $T^0=\varnothing$.

 {Let $Y=T^2$ be the set that consists of the first two items of $T$. For any item $(a_j^\ast,i_j^\ast)\in T$, $j\ge3$, and any set $Z\subseteq V\backslash\{a_1^\ast,a_2^\ast,a_j^\ast\}\times [k]$, by the ordering of the sets in $T$,} we have
  \begin{align*}
   f(Y\cup Z\cup \{(a_j^\ast,i_j^\ast)\})-f(Y\cup Z)&\le f(T^1)-f(\varnothing)\le f(T^1);\\
   f(Y\cup Z\cup \{(a_j^\ast,i_j^\ast)\})-f(Y\cup Z)&\le  f(T^1\cup\{(a_j^\ast,i_j^\ast)\})-f(T^1)\le f(T^2)-f(T^1).
 \end{align*}
{It follows from the summation of the above two inequalities that 
 \begin{align}\label{eq:33}
 f(Y\cup Z\cup \{(a_j^\ast,i_j^\ast)\})-f(Y\cup Z)\le f(Y)/2.
 \end{align}
Now,  we consider the iteration in which the algorithm chooses set $Y$} at the beginning of the greedy procedure, i.e. $S^0 =Y$. Define a function $g(S)=f(S)-f(Y)$ for all $S\supseteq Y$, which is also a monotone $k$-submodular function.

 Let $\hat{t}+1$ be the first step that the algorithm does not add item $a_{\hat{t}+1}\in U(T)$ to the current set $U(S^{\hat{t}})$ because its addition would exceed the budget. Thus $S^{\hat{t}+1}=S^{\hat{t}}$. We can further assume that $\hat{t}+1$ is the first step $t$ for which $S^t=S^{t-1}$. This assumption is without loss of generality, because if it happens earlier for some $t'<\hat{t}+1$, then {$a_{t'}$ does not belong to the optimal solution $T$, nor} the approximate solution we are interested in; {thus, we can  remove $a_{t'}$ from the ground set $V$, without affecting} the analysis, the optimal solution $T$, and the approximate solution obtained in the iteration with $S^0 = Y$.

 Note that $Y\subseteq T\cap S^t$, {for any $t=0,\ldots,\hat{t}$}. 
 Define $OPT_g(V')$ to be the optimal value of function $g$ over items $V'\subseteq V$ without constraint. We greedily construct a set $\tilde S\in\mathscr S$ over items $U(T)\cup U(S^t)$: starting with $Y\subseteq \tilde S$, consider every item in $U(S^t\backslash Y)$ in the same order as it is added to $U(S^t)$ in Algorithm \ref{alg:line}, and then consider every item in $U(T)\backslash U(S^t)$ in an arbitrary order; when considering each item, assign the \emph{best} index that brings the largest marginal gain. Clearly $S^t\subseteq \tilde S$, as the indices in $S^t$ are assigned greedily.  For any $t=0,\ldots,\hat{t}$, we have
\begin{align}\label{eq:ss}
\begin{split}
 g(T)&\le OPT_g(U(T)\cup U(S^t))\le 2\cdot g(\tilde S)\\
 &\le 2\left(g(S^t)+\sum_{(a,i)\in \tilde S\backslash S^t}(g(S^t\cup\{(a,i)\})-g(S^t))\right)\\
 &= 2\left(g(S^t)+\sum_{(a,i)\in \tilde S\backslash S^t}(f(S^t\cup\{(a,i)\})-f(S^t))\right)\\
 &\le 2\left(g(S^t)+(B-c(Y))\theta_{t+1}\right).
\end{split}
\end{align}
 The second inequality follows from the fact that $\tilde S$ is obtained greedily and {thus achieves a 2-approximation by} Proposition \ref{thm:jie}. The third inequality is because of Lemma \ref{lem:21}.  The last inequality follows from $f(S^t\cup\{(a,i)\})-f(S^t)\le c_a\cdot \theta_{t+1}$ and $\sum_{(a,i)\in \tilde S\backslash S^t}c_a\le B-c(Y)$.

Let $B_t = \sum_{\tau = 1}^t c_{a_\tau}$ 
and $B_0 = 0$. Define $B^\prime = B_{\hat{t}+1}$ and $B^{\prime\prime} = B - c(Y)$. By the definition of the item $a_{\hat{t}+1}$, we have $B^\prime > B\ge B^{\prime\prime}$. For $j = 1,\dots, B^\prime$, we define $\rho_j = \theta_t$ if $j = B_{t-1}+1,\dots,B_t$ (that is, $\rho_1=\cdots=\rho_{B_1}=\theta_1$,  $\rho_{B_1+1}=\cdots=\rho_{B_2}=\theta_2$, $\ldots$,  $\rho_{B_{\hat{t}}+1}=\cdots=\rho_{B'}=\theta_{\hat{t}+1}$).
Using this definition, we obtain $g(S^t) = \sum_{\tau=1}^t c_{a_\tau}\theta_{\tau} = \sum_{j=1}^{B_t} \rho_j$ for $t=1,\dots,\hat{t}$ and $g(S^{\hat{t}}\cup\{(a_{\hat{t}+1},i_{\hat{t}+1})\}) = \sum_{\tau=1}^{\hat{t}+1}c_{a_\tau}\theta_{\tau} = \sum_{j=1}^{B^\prime} \rho_j$.
Then we have equalities
\begin{align}\label{eq:min_eq}
\begin{split}
\min_{s = 1,\dots,B^\prime}\{\sum_{j=1}^{s-1}\rho_j + B^{\prime\prime}\rho_s\}=~&\min_{t = 0,\dots,\hat{t}}\{\sum_{j=1}^{B_t}\rho_j + B^{\prime\prime}\rho_{B_t+1}\}\\
=~&\min_{t = 0,\dots,\hat{t}}\{g(S^t) + B^{\prime\prime}\theta_{t+1}\}.
\end{split}
\end{align}
Combining \eqref{eq:min_eq} with \eqref{eq:ss} and Lemma \ref{lem:wol}, we obtain
\begin{align}\label{eq:77}
\begin{split}
\frac{g(S^{\hat{t}}\cup\{(a_{\hat{t}+1},i_{\hat{t}+1})\})}{g(T)} =~& \frac{\sum_{j=1}^{B^\prime}\rho_j}{g(T)}\\
\geq~&\frac{\sum_{j=1}^{B^\prime}\rho_j}{2\cdot\min_{s = 1,\dots,B^\prime}\{\sum_{j=1}^{s-1}\rho_j + B^{\prime\prime}\rho_s\}}\\
\geq~&\frac{1}{2}(1-e^{-B^\prime/B^{\prime\prime}})>\frac{1}{2}(1-e^{-1}).
\end{split}
\end{align}

Finally, combining \eqref{eq:33}  and \eqref{eq:77}, we obtain {a lower bound on the output $f(S_A)$ of our algorithm:}
\begin{align}
f(S_A)\ge f(S^{\hat{t}}) =~& f(Y) + g(S^{\hat{t}})\nonumber\\
=~&f(Y) + g(S^{\hat{t}}\cup\{(a_{\hat{t}+1},i_{\hat{t}+1})\})-g(S^{\hat{t}}\cup\{(a_{\hat{t}+1},i_{\hat{t}+1})\})+g(S^{\hat{t}})\nonumber\\
=~&f(Y) + g(S^{\hat{t}}\cup\{(a_{\hat{t}+1},i_{\hat{t}+1})\})-(f(S^{\hat{t}}\cup\{(a_{\hat{t}+1},i_{\hat{t}+1})\})-f(S^{\hat{t}}))\nonumber\\
\geq~&f(Y) + \frac{1}{2}(1-e^{-1})g(T)-f(Y)/2\label{eq:last}\\
\geq~&\frac{1}{2}(1-e^{-1})f(T).\nonumber
\end{align}
\end{proof}

\section{{Discussions}}
{We remark that the  {proof idea} of Theorem 1 generally follows the proof of the $(1-\frac1e)$-approximation algorithm for submodular knapsack maximization in \cite{sviridenko2004note}. That is, first upper bound the marginal value brought by an single item in the optimal solution on the basis of $Y$ (as in (\ref{eq:33})), then upper bound the optimal value $g(T)$ (as in (\ref{eq:77})), and finally derive a lower bound on our solution $f(S^A)$. 
There are two {main} differences. First, we reduce the enumeration in the algorithm from subsets of size three in \cite{sviridenko2004note} to two. The reason we can do it is that, if the starting set $Y$ is of size $s$, then the RHS of Eq. (\ref{eq:33}) is $\frac{f(Y)}{s}$; this will be used to derive Eq. (\ref{eq:last}), which only requires $1-\frac1s\ge \frac12(1-e^{-1})$. Thus, a size $s=2$ is enough for the analysis. Second, the  additional difficulty that arises in our problem is that, we can no longer obtain an upper bound on $g(T)$ straightforwardly from Lemma \ref{lem:21}, because the items in $U(T)\cap U(S^t)$ may have different indices in $T$ and $S^t$.  To overcome this, we construct an intermediary $\tilde S$ over the items in $U(T)\cap U(S^t)$ by the Greedy algorithm, such that $g(T)\le 2\cdot g(\tilde S)$  (see Eq. (\ref{eq:ss})), and then use  Lemma \ref{lem:21} to upper bound the value $g(\tilde S)$. Thus, we indirectly obtain an upper bound on $g(T)$. }

{One may be surprised by the fact that we get a $\frac12(1-\frac1e)$-approximation for any $k$, whereas there is a $(1-\frac1e)$-approximation when $k=1$ \cite{sviridenko2004note}. 
The reason for such a jump is that, the approximation ratio of the Greedy (which is used to construct $\tilde S$) is trivially 1 for $k=1$, but jumps to 2 for any $k\ge 2$ (and the analysis is tight). It would be an interesting direction to get an approximation ratio that degrades smoothly as a function of $k$. }

\bibliography{reference}

\end{document}